\documentclass[journal]{IEEEtran}

\usepackage{graphicx}
\usepackage[caption=false,font=footnotesize]{subfig}
\usepackage{algorithmic}
\usepackage{amsthm}
\usepackage{amsmath}
\newtheorem{theorem}{Theorem}
\newtheorem*{lemma*}{lemma}
\newtheorem{proposition}[theorem]{Proposition}

\usepackage{enumerate}
\usepackage{wrapfig}

\begin{document}
\title{An Incentive-Compatible Scheme for Electricity Cooperatives: An Axiomatic Approach}
\author{Abbas~Ehsanfar,~\IEEEmembership{Student Member,~IEEE,}
        Babak~Heydari,~\IEEEmembership{Member,~IEEE}
 
\thanks{A. Ehsanfar is with the School
of Systems and Enterprises, Stevens Institute of Technology, Hoboken,
NJ, 07030 USA (e-mail: aehsanfa@stevens.edu).}
\thanks{B. Heydari is with the School
of Systems and Enterprises, Stevens Institute of Technology, Hoboken,
NJ, 07030 USA (e-mail: babak.heydari@stevens.edu).}}

\markboth{IEEE transactions on Smart Grids}
{Ehsanfar \MakeLowercase{\textit{et al.}}: IEEE transactions on Smart Grids}

\maketitle

\begin{abstract}

This paper introduces a new scheme for autonomous electricity cooperatives, called predictive cooperative (PCP), which aggregates commercial and residential electricity consumers and participates in the electricity market on behalf of its members.  An axiomatic approach is proposed to calculate the day-ahead bid and to disaggregate the collective cost among participating consumers. The resulting formulation is shown to keep the members incentivized to both participate in the cooperative and remain truthful in reporting their expected loads. The scheme is implemented using PJM (world's largest  wholesale electricity market) real-time and day-ahead price data for 2015 and a collection of residential and commercial load profiles. The model performance of this framework is compared to that of \textit{real-time pricing} (RTP) scheme, in which wholesale market prices are directly applied to individual consumers. The results show truthful load announcement by consumers, reduction in electricity price variation for all consumers, and comparative benefits for participants.  
\end{abstract}

\begin{IEEEkeywords}
energy cooperatives, demand side management, agent-based simulation, mechanism design,  electricity price
\end{IEEEkeywords}

\IEEEpeerreviewmaketitle

\section{Introduction}
 \IEEEPARstart{T}{he} new paradigm of the electricity industry commonly known as smart grid includes active demand response (DR), distributed energy resources (DER), enhanced asset utilization, and consumer choice~\cite{rahimi2010demand}.

 The integration of demand-side management (DSM) in the future smart grid has been widely discussed in the literature, e.g.,~\cite{palensky2011demand,albadi2008summary}. In general, \textit{demand response} approaches can be divided into three categories: autonomous demand response and scheduling~\cite{parvania2010demand,gatsis2011cooperative}; price-based demand-side management~\cite{su2009quantifying}; and a mixed approach that combines these two schemes~\cite{chen2012real,mohsenian2010autonomous} in order to provide the end-consumer with simultaneous economic benefit and convenience in participation.

Price-based demand response applies variable pricing to incentivize consumers to flatten their demand profile. Several pricing schemes have been suggested in the literature including time of use (TOU), critical peak pricing (CPP), and real-time pricing (RTP)~\cite{albadi2008summary}. RTP schemes charge consumers hourly prices that reflect the real cost of electricity in the wholesale market~\cite{borenstein2002dynamic,mohsenian2010optimal,allcott2009real}. Variations of this scheme have already been introduced to commercial and residential consumers in some regions such as ComEd's residential real-time pricing (RRTP) in the state of Illinois. Nevertheless, for residential consumers, RTP scheme has not been sufficiently effective for demand response because of a number of economic and behavioral issues, most notably low elasticity of residential demand (especially at high prices) and difficulty of real-time monitoring and inconvenience that residential consumers face in prediction of prices \cite{gyamfi2013residential,allcott2011rethinking}.
 
\textit{Automated enabling systems} provide dynamic, active load management that can respond to unforeseen real-time variations in the market~\cite{spees2007demand}. Several benefits of intelligent autonomous units in \textit{smart grid} are discussed in details in~\cite{davito2010smart,peters2013reinforcement,ketter2013power}. In the spirit of these benefits, some authors have suggested combined use of real-time pricing with an automated framework in DR application: Conejo et al. and Parvania et al. address DR with RTP and propose a solution for the subsequent appliance scheduling optimization problem~\cite{conejo2010real,parvania2010demand} ; Mohsenian et al. developed an RTP residential load control for maximizing the consumer utility~\cite{mohsenian2010optimal}; and Lujano et al. propose a load management strategy that optimizes negotiation between consumers and retailers with RTP~\cite{lujano2012optimum}. In practice, automated demand response relies on home energy management systems (HEMS) to communicate with retailers, and autonomous units of other consumers~\cite{hubert2012modeling}. 

The majority of available RTP based models are developed by assuming active participation of end-consumers in the wholesale electricity market, an assumption that is unrealistic from regulatory perspective in most regions~\cite{rahimi2010demand}. Moreover, as has been argued~\cite{mohsenian2010optimal,palizban2014microgrids,santos2015multi}, a more effective practice in demand-side management is to manage aggregate behavior and incentivize demand response with price signals in a multi-agent system, something that is not directly implementable in a pure RTP scheme because real-time prices indiscriminately apply to all RTP consumers~\cite{rious2015electricity}. In principle, load aggregation benefits consumers by offsetting the individual load volatilities, thus decreasing relative load variations and enhancing the accuracy of load prediction. However, in order to benefit, consumers need to participate based on an internal contractual scheme that clearly determines how aggregate benefits or potential aggregate penalties shall be distributed among participating consumers according to their level of participation, truthfulness, load estimation, and external factors such as market prices. Such \textit{cooperative} schemes have been previously introduced ~\cite{akasiadis2013agent,veit2013multiagent,gatsis2011cooperative}, however they need to be formulated in such a way to incentivize participation and truthful information sharing and disincentive free riding, gaming, and detrimental opportunistic behavior. 

In this spirit, the goal of this paper is to introduce a scheme that flattens real-time price variations for participating consumers in an electricity \textit{cooperative} and incentivizes them to truthfully reveal their expected load information. This scheme relies on an autonomous agent that interacts with consumers and participates in the electricity market on their behalf.

The proposed scheme, called \textit{predictive cooperative} (PCP), includes four key components: load information sharing by consumers, load aggregation and forecasting, day-ahead bidding, and payment disaggregation. For each consumer, we define two key parameters: an announced load as the expected consumption and a confidence factor as a value between 0 and 1 that indicates, for the purpose of market bidding, how much the consumer prefers to rely on PCP's suggested bid, as opposed to her own intended consumption. As the first step, the PCP agent receives the load and \textit{confidence factor} for day-ahead loads from each HEMS unit. Then, PCP forecasts the short-term aggregate load using \textit{double-seasonal autoregressive smoothing method}~\cite{taylor2003short}, which we have modified to accommodate dynamic parameter updating. The PCP then calculates the individual day-ahead bid using the announced loads, confidence levels, and aggregate forecast. Finally, the total cost will be disaggregated among consumers according to an incentive-compatible formulation, developed axiomatically. 

There are two distinct advantages in combining PCP demand forecasts with individual consumers' estimates: First, the sum of values provided by consumers is not necessarily the accurate aggregate estimation and aggregate forecasting can improve the estimation accuracy~\cite{srinivasan2008energy}. In addition, consumers may be untruthful or biased in reporting their estimations for energy usage~\cite{gedra1993markets}, if market bids depend only on their reported estimates. 

Our paper makes the following contributions to the literature: First, it offers a novel incentive-compatible scheme for electricity \textit{cooperative}s to achieve a collective goal of reducing price risk for the participants while enabling the prospect of reduced average price for individual consumers, using a bottom-up, axiomatic approach, which to the best of authors knowledge have not been previously used in the literature of smart grids. The paper furthermore mathematically proves that the provided scheme is not only beneficial for individual consumers to participate, but also provides incentives to participants to adopt a truthful behavior when interacting with the scheme. The combination of these two ensures a bilateral trust between the \textit{cooperative} scheme and individual consumers. Finally, the paper uses an agent-based simulation, using real-world data from the PJM\footnote{PJM electricity wholesale market coordinates and delivers power to whole or part of 13 U.S. states plus the District of Columbia.} market to demonstrate the benefits of the proposed scheme. As a part of the simulation process, and as a side contribution, the paper also improves the accuracy of one of the standard short-term load forecasting models by migrating toward dynamic parameter updating. Although formulated for the PJM electricity market, the provided scheme and the axiomatic method used in its development can provide useful insights for formulating other cooperatives of end-consumers in commodity markets with forward/future contracts, such as agricultural water, natural gas and crops.

In the rest of this paper, we introduce the proposed framework and explain its various components. We first provide an overview of the architecture of the framework in Section~\ref{sec:framework}, then, in Section~\ref{sec:ax_model} we introduce an axiomatic approach for effective day-ahead bids and disaggregation of the overall payment into individual shares, followed by the load forecasting model in Section~\ref{sec:forecasting}. Finally, Section~\ref{sec:agentbasedsim} presentes the simulation of  the proposed scheme and its performance, using PJM market data. 

\section{Interactive Electricity Cooperatives} 

\label{sec:framework}
In this section, we introduce the concept of  \textit{predictive cooperative} (PCP), a non-profit scheme that participates in the wholesale market on behalf of its members on the one hand, and acts as a contractual scheme for their participation on the other hand. Residential and commercial electricity consumers subscribe to the PCP, which provides consumers with the day-ahead price, receives consumer information, forecasts aggregated load, places an aggregate bid in the market, and finally disaggregates the collective cost among consumers. An overview of the proposed procedure is illustrated in Fig.~\ref{fig:PCP} and the following algorithm:

\noindent\rule{\columnwidth}{0.4pt}
\textbf{Algorithm I}: PCP Daily Procedure.\\[-5pt]
\noindent\rule{\columnwidth}{0.4pt}\\[-10pt]
{\fontsize{8}{10}\selectfont
 \begin{algorithmic}[1]
\STATE \textbf{I. 11am-12pm:} Consumers provide their day-ahead expected load and confidence factor.
\STATE \textbf{II. 12pm:} PCP forecasts load for the next-day time interval (12:00am-11:59pm) and places bid in the day-ahead market.
\STATE \textbf{III. 12am:} 
\FOR{each time step for 24 hours}
  \STATE Participants consume their real-time load
  \STATE PCP receives real-time prices from real-time market
  \STATE PCP clears its account in real-time market
  \STATE PCP charge the consumers with the disaggregated hourly payment
  \IF{11am}
  \STATE Repeat from step I for the next-day bids
  \ENDIF
  \ENDFOR 
  \end{algorithmic}}
  \noindent\rule{\columnwidth}{0.4pt}
Besides an accurate load prediction algorithm, as explained in Section~\ref{sec:forecasting}, the success of the PCP agent depends on two critical components: First, since the load predicted by the PCP agent is in general different from the sum of declared day-ahead loads by individual consumers, the agent needs to combine both of these signals in order to calculate the effective day-ahead bid for each consumer. The relative weight of each signal for a given consumer depends on the level of confidence that the consumer has announced in the PCP estimation versus her own. To combine these signals as a function of the confidence factor, we propose an axiomatic approach as explained in Section~\ref{sec:dayaheadbid}.

The second issue arises because, even with a good effective day-ahead bid model, real-time consumption have discrepancy with respect to the effective day-ahead bid which translates into potential costs or surpluses for the PCP and consequently for individual consumers. Disaggregating these costs and surpluses is a function of several parameters, such as day-ahead bids, real-time consumption and market prices. Here again, we use an axiomatic approach, as explained in Section~\ref{sec:paydisag} to disaggregate the total payment for participating consumers as a function of the contributing parameters.

\begin{figure}[t]
\centering
 \includegraphics[width=0.47\textwidth]{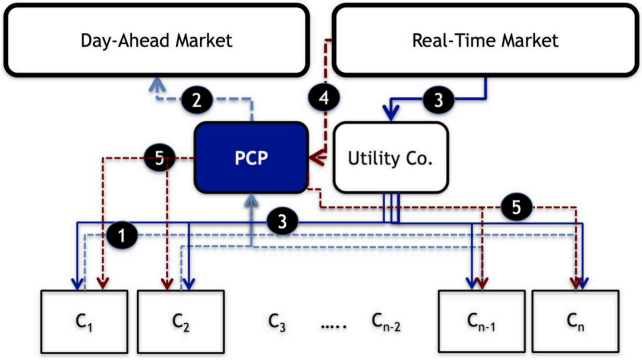}
\caption{The proposed PCP steps are: 1) PCP receives the day-ahead load and confidence factor; 2) PCP places bid in day-ahead market; 3) participants consume real-time load; 4)  PCP receives real-time cost; and 5) PCP disaggregates the cost among participants. The day-ahead model (Section. (\ref{sec:dayaheadbid})) is used for day-ahead market and the payment model (Section.(\ref{sec:paydisag})) is used for real-time market.}
\label{fig:PCP}
\end{figure}

\section{Axiomatic Day-Ahead and Payment Disaggregation Models}
\label{sec:ax_model}
In this section we propose two models, one for the day-ahead bid calculation and the other for final payment disaggregation among participants in an electricity \textit{cooperative}. In the first model, we use an axiomatic approach to formulate how the PCP should calculate day-ahead bids by combining its forecasted load - using a central forecasting algorithm as will be explained in Section \ref{sec:forecasting}-with the sum of expected loads, announced by individual consumers. The second model is used to disaggregate the total payment among participants based on a set of axioms inferred from mathematical constraints (fixed-sum etc.) and fairness in payment model. 

\subsection{Day-Ahead Model}
\label{sec:dayaheadbid}

The PCP agent uses two distinct signals to calculate its day-ahead bid: its own forecasted load ($L_{f(t)}$), and the sum of announced day-ahead loads reported by the consumers to the PCP ($L_{a(t)}=\sum_{i}{l_{a(t)}^i}$). We introduce \textit{total effective day-ahead bid} ($L_e$) as the weighted combination of these two signals. But how should the PCP agent decide about their relative weights? To formulate this decision, we further assume that each consumer has a certain level of confidence ($\rho_i$) to the PCP agent. We model this confidence ($\rho_i$) as a number between 0 (no confidence) and 1 (full confidence). Thus, the PCP needs to take into account the confidence factors of different consumers in order to combine the two signals that make the effective day-ahead bid. To facilitate the calculations, we introduce \textit{individual effective day-ahead bids}($l_{e}^i$) as a set of instrumental variables to capture individual consumers' shares of PCP day-ahead bid, so that $L_e=\sum_{i}{l_e^i}$. To formulate the value of $l_{e}^i$s, we take on an axiomatic approach by assuming the following axioms: 
\subsubsection{Boundedness}: The total effective day-ahead bid is bounded by the PCP forecasted load and the sum of individuals' declared loads:
 \begin{equation}
\exists \alpha \in [0,1]: L_{e(t)}=\alpha L_{f(t)}+(1-\alpha)L_{a(t)}
\label{form:boundary}
\end{equation}
where $L_{e(t)}$ is the aggregate day-ahead bid.

\subsubsection{Similarity in deviation direction}: For all the day-ahead loads, we assume that if the total effective bid is more(less) than the sum of reported loads, then effective individual bid is also more (less) than the individual reported load.This makes sense since, depending on the confidence factor of a given consumer, her effective bid is always between her reported bid and a number calculated by the PCP, and the direction of this deviation depends on the aggregate deviation. In other words deviation of effective bids shall have the same sign as the aggregate deviation: $sign(l_{e(t)}^i-l_{a(t)}^i)=sign(L_{e(t)}-L_{a(t)}) \nonumber$.

\subsubsection{Proportionality}: All other factors equal, effective bid deviation of a consumer is an increasing function with respect to her confidence factor: 
${\partial (l_{e(t)}^i-l_{a(t)}^i)}/{\partial {\rho}_{i(t)}}>0.$
We now need an equation for day-ahead bid that satisfies all these axioms. The formulation is not unique, yet we pick the following equation, which is simple and, as we will show, satisfies all three axioms: 
 \begin{equation}
l_{e(t)}^i=l_{a(t)}^i+{\rho}_i^2*l_{a(t)}^i*(L_{f(t)}-L_{a(t)})/\sum_{j=1}^{N}{{\rho}_jl_{a(t)}^j}.
\label{form:risksharing}
\end{equation}
And the total day-ahead bid is: $L_{e(t)}=\sum_{i}{l_{e(t)}^i}$.

\begin{proposition}
Equation~(\ref{form:risksharing}) satisfies Axioms A.1- A.3. 
\end{proposition}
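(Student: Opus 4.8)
The plan is to verify the three axioms one at a time directly from equation~(\ref{form:risksharing}), treating the quantities $L_{f(t)}$, $L_{a(t)}=\sum_j l_{a(t)}^j$, and the $\rho_j\in[0,1]$ as given, and assuming (as is physically reasonable) that all announced loads $l_{a(t)}^i$ are nonnegative and not all zero, so that the denominator $\sum_{j=1}^N \rho_j l_{a(t)}^j$ is positive. Write $D := L_{f(t)}-L_{a(t)}$ for the aggregate deviation and $S := \sum_{j=1}^N \rho_j l_{a(t)}^j > 0$, so that equation~(\ref{form:risksharing}) reads $l_{e(t)}^i - l_{a(t)}^i = \rho_i^2 l_{a(t)}^i D / S$.

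First I would check \textbf{Boundedness (A.1)}. Summing equation~(\ref{form:risksharing}) over $i$ gives $L_{e(t)} = L_{a(t)} + \big(\sum_i \rho_i^2 l_{a(t)}^i\big) D / S$. Setting $\alpha := \big(\sum_i \rho_i^2 l_{a(t)}^i\big)\big/\big(\sum_i \rho_i l_{a(t)}^i\big)$, one gets $L_{e(t)} = \alpha L_{f(t)} + (1-\alpha)L_{a(t)}$ after substituting $D = L_{f(t)}-L_{a(t)}$. It then remains to show $\alpha\in[0,1]$: nonnegativity is clear since all terms are nonnegative, and $\alpha\le 1$ follows termwise from $\rho_i^2 l_{a(t)}^i \le \rho_i l_{a(t)}^i$, valid because $\rho_i\in[0,1]$ and $l_{a(t)}^i\ge 0$. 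Next, \textbf{Similarity in deviation direction (A.2)}: from $l_{e(t)}^i - l_{a(t)}^i = \rho_i^2 l_{a(t)}^i D / S$ with $\rho_i^2 l_{a(t)}^i/S \ge 0$, the sign of the left side equals the sign of $D$; and since $L_{e(t)}-L_{a(t)} = \alpha D$ with $\alpha\ge 0$, its sign also equals the sign of $D$ (strictly, one should note the $\rho_i=0$ or $l_{a(t)}^i=0$ edge cases give a zero deviation, which is consistent with ``same sign'' read as $\le$/$\ge$). Finally, \textbf{Proportionality (A.3)}: differentiating $l_{e(t)}^i - l_{a(t)}^i = \rho_i^2 l_{a(t)}^i D/S$ with respect to $\rho_i$, holding the other $\rho_j$ fixed, and writing $S = \rho_i l_{a(t)}^i + S_{-i}$ with $S_{-i} := \sum_{j\ne i}\rho_j l_{a(t)}^j$, gives
\begin{equation}
\frac{\partial (l_{e(t)}^i - l_{a(t)}^i)}{\partial \rho_i} = l_{a(t)}^i D \cdot \frac{2\rho_i S - \rho_i^2 l_{a(t)}^i}{S^2} = l_{a(t)}^i D \cdot \frac{\rho_i(2 S_{-i} + \rho_i l_{a(t)}^i)}{S^2},
\end{equation}
which is nonnegative whenever $D\ge 0$ and nonpositive when $D\le 0$; I would then remark that ``increasing'' here is meant in the sense that the deviation moves further in the direction dictated by $D$, i.e. $|l_{e(t)}^i-l_{a(t)}^i|$ is nondecreasing in $\rho_i$, which is exactly what the signed derivative above expresses.

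The main obstacle is not any single calculation — each axiom reduces to a one-line sign or monotonicity check — but rather pinning down the hypotheses and the precise reading of A.2 and A.3 so the statements are literally true. In particular I would be careful to (i) state the standing assumption $l_{a(t)}^i\ge 0$ with $L_{a(t)}>0$ so the denominator is well defined and positive, and (ii) interpret the strict inequalities in A.2 (``$sign(\cdot)=sign(\cdot)$'') and A.3 (``$\partial(\cdot)/\partial\rho_i>0$'') as holding on the nondegenerate branch $l_{a(t)}^i>0$, $\rho_i>0$, $D\ne 0$, since otherwise both sides can legitimately vanish. With these conventions fixed, the proof is just the three short verifications above.
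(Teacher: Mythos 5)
Your proof is correct and follows essentially the same route as the paper's: a direct verification of each axiom from equation~(\ref{form:risksharing}). In fact your write-up is more careful on two points where the paper's is loose --- the paper exhibits $\alpha$ with a single free index $i$ rather than the aggregate $\bigl(\sum_i \rho_i^2 l_{a(t)}^i\bigr)/\bigl(\sum_j \rho_j l_{a(t)}^j\bigr)$ that you correctly derive by summing over $i$, and the paper's derivative for A.3 silently drops the factor $L_{f(t)}-L_{a(t)}$, so its claimed strict positivity only holds under the reinterpretation you make explicit (the deviation grows in magnitude, i.e.\ moves further in the direction dictated by the aggregate deviation, as $\rho_i$ increases).
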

\begin{proof}
First, the $L_{e(t)}$ will remain in the reasonable boundary defined by $L_{f(t)}$ and $L_{a(t)}$. According to (\ref{form:boundary}) we can find $\alpha$ equal to: $\alpha={\rho}_i^2*l_{a(t)}^i/\sum_{j=1}^{N}{{\rho}_jl_{a(t)}^j}$ which is in $[0,1]$ because $\rho_i \in [0,1]$. Second, \(l_{e(t)}^i-l_{a(t)}^i\) and \(L_{e(t)}-L_{a(t)}\) have identical signs because ${\rho}_i^2*l_{a(t)}^i$ and $\sum_{j=1}^{N}{{\rho}_jl_{a(t)}^j}$ are both positive. 
Finally, the formulation ensures positive derivatives for effective load deviation with respect to confidence level ($\rho_i$): 
\setlength{\arraycolsep}{0.0em}
\begin{eqnarray}
\frac{\partial (l_e^i-l_{a(t)}^i)}{\partial {\rho}_i}&=\frac{\rho_il_{a(t)}^i*(2\sum_{j=1}^{N}{\rho_il_{a(t)}^j}-\rho_il_{a(t)}^i)}{(\sum_{j=1}^{N}{\rho_il_{a(t)}^j})^2}\nonumber\\
&=\frac{\rho_il_{a(t)}^i(2\sum_{j=\{1,..,N\}/i}{\rho_il_{a(t)}^j}+\rho_il_{a(t)}^i)}{(\sum_{j=1}^{N}{\rho_il_{a(t)}^j})^2}>0.\nonumber
\end{eqnarray}
\setlength{\arraycolsep}{5pt}
 \end{proof}

\subsection{Payment-Disaggregation Model}
\label{sec:paydisag}
In real-time market, the total payment paid by the PCP is a function of the effective day-ahead load, calculated in the previous section, total real-time load and market prices. In PJM market,  this payment is calculated using the following formulation~\cite{PJMM28}:

\begin{equation}
P_{(t)}=L_{e(t)}*p_{d(t)}+(L_{r(t)}-L_{e(t)})*p_{r(t)}
\label{TotalPayment}
\end{equation}
where \(L_{r(t)}=\sum_{i}{l_{r(t)}^i}\) is the aggregate real-time load consumption as a sum of individual consumptions, $p_{d(t)}$ and $p_{r(t)}$ are respectively day-ahead and real-time prices\footnote{Without loss of generality, we assume that all the prices are positive.} and $L_{e(t)}$ is the total effective day-ahead load as calculated by the PCP in the previous section. We note that in the PJM formulation there is no penalty for real-time load imbalance, but this doesn't change the generality of the formulation.

Once the total payment is known, the PCP has to disaggregate it for the share of each participating consumer. Since we assume that the PCP is a non-for-profit scheme, we expect the sum of these payments to be equal to the total payment made by the PCP to the market. It is also fair to assume that individual payments must be increasing functions of individual consumptions. Moreover, we expect the payment disaggregation scheme to be aligned with the overall incentive goals of the \textit{cooperative}, that is encouraging participants to either minimize their deviations or to time-shift their loads to lower total daily payments made by the PCP. To create this incentive structure, PCP divides consumers into two groups: The first group (\textit{deviation reducers}) are those who have helped with lowering the total deviation, to whom the PCP offers a potential reward by giving them a price choice between day-ahead and real-time prices. This incentive is in general costly for the PCP and its cost is assumed to be shouldered by the second group (\textit{deviation contributors}) who have contributed to the total deviation. These constraints and incentive mechanisms are captured more clearly in the following axioms:

\subsubsection{Balanced Payment}: In a non-profit \textit{cooperative}, sum of all payments made by consumers must be equal to what the PCP pays to the market, i.e., $\sum_{i=1}^{N}{P_{(t)}^i}=P_{(t)}$.

\subsubsection{Positive Marginal Price}: As long as the direction of imbalance remains unchanged, i.e., the sign of $\Delta_{(t)}=L_{r(t)}-L_{e(t)}$ remains unchanged, each consumer's payment is an increasing function of her load. i.e. ${\partial (P_t^i)}/{\partial (l_{r(t)}^i)} > 0$.

\subsubsection{Deviation Incentive}: The first group of consumers, those who have not contributed to the aggregate load deviation, will be given a choice to pick the price ($p_{r(t)}$ \textit{vs} $p_{d(t)}$) by which their deviation payment is calculated. This deviation payment can in general be positive or negative.

In order to formulate deviation incentives, as well as deviation responsibility in \textit{Axiom 4}, we define four deviation cases for each individual as follows: $ \pmb{case_1}:\ \Delta>0\ and\ \delta^i>0, \pmb{case_2}:\ \Delta<0\ and\ \delta^i<0, \pmb{case_3}:\ \Delta>0\ and\ \delta^i<0, \pmb{case_4}:\ \Delta<0\ and\ \delta^i>0$. Where $\Delta$ was defined earlier and ${\delta_{(t)}^i}=l_{r(t)}^i-l_{e(t)}^i$ for each consumer. Cases 3 and 4 represent \textit{deviation reducers} while cases 1 and 2 represent \textit{deviation contributors}.

Assuming all consumers are rational, and assuming individual payments are calculated using similar formulation as in (\ref{TotalPayment}), this axiom results in the following payments for \textit{deviation reducer} consumers:
\setlength{\arraycolsep}{0.0em}
\begin{eqnarray}
 P^i = \left\{ 
  \begin{array}{l l}
    l_e^i*p_{d}+(l_r^i-l_e^i)*\max\{p_d,p_r\}&\text{for $case_3$}\\
    l_e^i*p_{d}+(l_r^i-l_e^i)*\min\{p_d,p_r\}&\text{for $case_4$}
  \end{array} \right.
  \label{form:choicepayment}
\end{eqnarray}
\setlength{\arraycolsep}{5pt}

\subsubsection{Deviation Responsibility}: This axiom is complimentary to the previous axiom, and states that \textit{deviation contributors} shall be responsible for potential additional costs, posed on PCP, because of the incentives given to \textit{deviation reducers} as stated under \textit{Axiom B.3}. 

We now need to formulate individual payments that satisfy these four axioms. As the first step, we distinguish between scenarios depending on the direction of price change from day-ahead to real-time. We consider subscripted letters $a$ and $b$ for $p_r\geq p_d$ and $p_d<p_r$, respectively. For instance $case_{1a}$ combines the three conditions of $\Delta L_{t}>0\ ,\ \Delta L_{t}^i>0$, and $p_r\geq p_d$. 
  
When $\Delta>0$ and $p_{r}>p_{d}$ ($case_{1a},case_{3a}$) or  $\Delta<0$ and $p_{r}<p_{d}$ ($case_{2b},case_{4b}$), incentives for \textit{deviation reducers}, as stated in \textit{Axiom B.3}, does not add additional cost to the \textit{cooperative} (price choice is the real-time price in these cases), thus the deviation payment of all consumers can be calculated using $p_r$ and the individual payment for each consumer, regardless of the case they belong to, is calculated as follows: 

\begin{equation}
 P^i =l_e^i*p_{d}+(l_r^i-l_e^i)*p_r 
\label{eq:regularpayment}
\end{equation}

For such cases, the above equation satisfies all four axioms by definition. However, when $\Delta>0$ and $p_{r} < p_{d}$ ($case_{1b},case_{3b}$) or $\Delta<0$ and $p_{r}>p_{d}$ ($case_{2a},case_{4a}$), payment calculations need to be formulated differently to satisfy \textit{Axioms B1-B4} since \textit{deviation reducers} use day-ahead prices in the choice they are given as stated in \textit{Axiom B.3}, which in turn result in additional cost for the PCP.  This cost, as mentioned earlier, should be paid by \textit{deviation contributors}.

Similar to the effective day-ahead load model, the formulation for individual payments here is not unique. We construct individual payments as follows and will later show that it satisfies all the axioms:
\setlength{\arraycolsep}{0.0em}
\begin{eqnarray}
P_{(t)}^i=l_{r(t)}^i*p_{d(t)}+{\Delta} *(p_{r(t)}-p_{d(t)})*\frac{\delta^i}{\sum_{j \in S} {\delta^j }}
\label{form:costsharing}
\end{eqnarray}
\setlength{\arraycolsep}{5pt}
where \(S\) indicates the set of all \textit{deviation contributors} and an additional cost is added to their payment, proportional to their share of total deviation. 

\begin{proposition}
The payment model satisfies Axioms B.1- B.4. 
\end{proposition}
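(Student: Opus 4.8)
The plan is to verify the four axioms one at a time, handling the two price-direction regimes separately as the paper has already set up. For the ``easy'' regime --- $\Delta>0$ with $p_r>p_d$ (cases $1a,3a$) and $\Delta<0$ with $p_r<p_d$ (cases $2b,4b$) --- the individual payment is given by (\ref{eq:regularpayment}), and all four axioms are essentially immediate: summing (\ref{eq:regularpayment}) over $i$ and using $\sum_i l_e^i=L_e$, $\sum_i l_r^i=L_r$ recovers (\ref{TotalPayment}), giving Balanced Payment (B.1); $\partial P^i/\partial l_r^i = p_r>0$ gives Positive Marginal Price (B.2); the price choice offered in Axiom B.3 resolves to $p_r$ in exactly these cases, so (\ref{eq:regularpayment}) coincides with (\ref{form:choicepayment}) for the deviation reducers, giving B.3; and since no extra cost is created, B.4 holds vacuously. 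So the bulk of the work is the ``hard'' regime where (\ref{form:costsharing}) is used: $\Delta>0$ with $p_r<p_d$ (cases $1b,3b$) and $\Delta<0$ with $p_r>p_d$ (cases $2a,4a$).

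For that regime I would proceed as follows. \emph{Balanced Payment:} sum (\ref{form:costsharing}) over all $i$; the first term gives $L_r\,p_d$ and the correction term telescopes because $\sum_{j\in S}\delta^j$ is a common denominator while $\sum_i \delta^i = L_r-L_e = \Delta$, so the total correction is $\Delta(p_r-p_d)\cdot(\Delta/\ ?)$ --- here I need to be careful: the sum of $\delta^i$ over \emph{all} $i$ is $\Delta$, but the denominator is the sum over $S$ only, so I would instead observe that the payments for deviation \emph{reducers} are fixed by (\ref{form:choicepayment}) and only the contributors' payments follow (\ref{form:costsharing}); then $\sum_{j\in S}\delta^j$ in the denominator exactly matches the numerator sum over $S$, the correction collapses to $\Delta(p_r-p_d)$, and adding the reducers' payments from (\ref{form:choicepayment}) must reproduce $P$. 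I would check this bookkeeping explicitly. \emph{Positive Marginal Price:} for a contributor, $\partial P^i/\partial l_r^i = p_d + \Delta(p_r-p_d)\,\partial(\delta^i/\sum_{j\in S}\delta^j)/\partial l_r^i$; I would argue the sign of the bracketed factor, using that $\delta^i$ and $\Delta$ have the same sign for contributors (so the ratio is in $[0,1]$) and that in this regime $(p_r-p_d)$ and $\Delta$ have opposite signs, making the product $\Delta(p_r-p_d)<0$; then a short computation of the derivative of $\delta^i/\sum_S \delta^j$ shows it keeps $P^i$ increasing. \emph{Deviation Incentive (B.3):} confirm that for reducers (cases $3,4$) the prescribed payment is (\ref{form:choicepayment}) with the max/min resolving correctly in this regime, which is how $S$ and the split were defined. \emph{Deviation Responsibility (B.4):} show that the extra amount $\Delta(p_r-p_d)\,\delta^i/\sum_{j\in S}\delta^j$ charged to each contributor is exactly the shortfall created by letting reducers pick the favorable price, i.e. that the total extra charged to $S$ equals the difference between what reducers pay under (\ref{form:choicepayment}) and what they would pay under the plain rule (\ref{eq:regularpayment}).

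The main obstacle I expect is the Positive Marginal Price axiom in the hard regime: because the denominator $\sum_{j\in S}\delta^j$ itself depends on $l_r^i$ (and because changing $l_r^i$ can in principle move consumer $i$ between the reducer and contributor groups, or change $\Delta$), the derivative is not a one-line computation and one has to be careful about what is being held fixed --- the axiom's hypothesis ``as long as the sign of $\Delta$ is unchanged'' is exactly what licenses treating the regime as fixed, and I would lean on that to restrict attention to a neighborhood where $i\in S$ and $\mathrm{sign}(\Delta)$ are constant, then show the bracketed derivative factor has the right sign there. A secondary but routine annoyance is just keeping the four (really eight, with the $a/b$ split) sub-cases organized and making sure the max/min in (\ref{form:choicepayment}) is consistently translated into ``$p_d$'' or ``$p_r$'' in each sub-case; I would tabulate this once at the start of the hard-regime argument to avoid sign errors.
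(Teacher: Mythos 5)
Your overall structure---splitting into the two price-direction regimes, summing payments for B.1, differentiating for B.2, and reading B.3/B.4 off the definitions---is exactly the paper's, and your balanced-payment bookkeeping is the correct one: in the hard regime the reducers' max/min choice in (\ref{form:choicepayment}) resolves to $p_d$ so they pay $l_r^i p_d$, the contributors' corrections have numerators summing to the common denominator $\sum_{j\in S}\delta^j$, the corrections collapse to $\Delta(p_r-p_d)$, and the total is $L_e p_d + \Delta p_r$ as in (\ref{TotalPayment}).

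The genuine gap is Positive Marginal Price in the hard regime, which is the only nontrivial step of the proposition and which you defer as ``a short computation \dots shows it keeps $P^i$ increasing.'' The sign analysis you offer actually points the wrong way: for a contributor in, say, case $1b$ one has $\Delta(p_r-p_d)<0$ while $\delta^i/\sum_{j\in S}\delta^j$ is nondecreasing in $l_r^i$, so the correction term is \emph{decreasing} in $l_r^i$; positivity of $\partial P^i/\partial l_r^i$ therefore cannot follow from signs alone and needs a quantitative bound showing the correction's slope never overwhelms $p_d$. Note also that $\Delta=L_r-L_e$ itself depends on $l_r^i$ (with unit derivative), so it cannot be pulled outside the derivative as in your displayed formula. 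The paper's resolution is to write $f=\delta^i\Delta/\sum_{j\in S}\delta^j=(x-c)(x+\alpha)/(x+\beta)$ with $x=l_r^i$, $c=l_e^i$, $\alpha<\beta$, compute $df/dx=(x^2+2\beta x+\alpha\beta-c\beta+c\alpha)/(x+\beta)^2$, and bound the numerator by the denominator to conclude $df/dx<1$, whence $dP^i/dx=p_d+(p_r-p_d)\,df/dx$ is a convex-type combination of the two nonnegative prices and hence positive. Without this bound (or an equivalent estimate) your B.2 argument does not go through; the remainder of your plan matches the paper and is sound.
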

\begin{proof}
First, it's easy to show that the formulation is \textit{balanced} (Axiom B.1) and the sum of individual payments is equal to (\ref{TotalPayment}). In the cases when (\ref{eq:regularpayment}) applies, payments naturally add up to (\ref{TotalPayment}). However, for cases where (\ref{form:costsharing}) applies, we have: $P=\sum_{i\in N}{l_{r}^i*p_{d}}+\Delta*(p_{r}-p_{d})*{\sum_{i \in S}{\delta^i}}/{\sum_{j \in S} {\delta}^j}=(l_{d}+\Delta)*p_{d}+\Delta*(p_{r}-p_{d})=L_{d}*p_{d}+\Delta*p_{r}$.

Second, the proposed formulation is increasing with respect to the individual load, given the total balance and price direction (Axiom B.2). For cases where (\ref{eq:regularpayment}) applies, this is satisfied by definition. For those cases who use (\ref{form:costsharing}), for the sake of simplicity we put: $$f=\frac{\delta^i*\sum_{j=1}^N \delta^j}{\sum_{j \in S} {\delta^j}}=\frac{(x-c)(x+\alpha)}{(x+\beta)}$$
when \(x=l_{r}^i\), \(c=l_e^i\), \(\alpha<\beta\) and \(x>c\) (since the denominator is the sum of only positive deltas). Then, we prove that the first derivative of the payment function is strictly positive: 
\setlength{\arraycolsep}{0.0em}
\begin{eqnarray}
df/dx=(x^2+{2\beta}x+\alpha\beta - c\beta+c\alpha)/(x^2+{2\beta}x+{\beta}^2)\nonumber
\end{eqnarray}
\setlength{\arraycolsep}{5pt}
since \(\alpha\beta - c\beta+c\alpha<\alpha\beta - c\beta+c\beta=\alpha\beta<{\beta}^2\)
then \(df/dx<1\) and since \(dP_{(t)}^i/dx=p_{D(t)}+(p_{R(t)}-p_{D(t)})*df/dx\) and \(p_{R(t)}\geq0\)
then \(dP_{(t)}^i/dx>0\). 
Similar logic applies to the case with \(\delta^i<0\).

Finally, (\ref{form:costsharing}) ensures \textit{deviation responsibility} for $case_1$ and $case_2$ and (\ref{form:choicepayment}) ensures the consumer's choice for $case_3$ and $case_4$ (Axioms B. 3 and B.4). 
\end{proof}

\subsection{Incentive Compatibility}
In this section, we first show that, given the suggested scheme, participants behave truthfully in reporting their expected consumptions, thus the PCP can trust consumers. Then, and as a key result of this paper, we show that consumers are in general better off if they join a PCP, compared to a hypothetical scenario in which they directly participate in the market.
\begin{proposition}
In the proposed scheme, if a consumer assumes that the aggregate load is unbiased, regardless of the distribution of confidence for other consumers, being truthful about the expected load is a Nash equilibrium. 
\label{Proposition01}
\end{proposition}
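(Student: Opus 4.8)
The plan is to fix every consumer other than $i$ at the truthful report and to show that consumer $i$ cannot lower her expected payment by announcing an $l_{a}^i$ different from her true expected consumption $E[l_{r}^i]$; once this is established, truthful reporting is a (weak) best response to truthful reporting by the others, which is precisely the Nash property. The first step is to pin down the channels through which the report acts. By (\ref{form:risksharing}) the report enters consumer $i$'s payment only through her effective bid $l_{e}^i$ and, via $L_{a}=\sum_j l_{a}^j$, through the other consumers' effective bids and hence through the aggregate deviation $\Delta_{(t)}=L_{r(t)}-L_{e(t)}$ that decides which payment regime of Section~\ref{sec:paydisag} applies; the realized consumption $l_{r}^i$ is her fixed type and does not change with the report. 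So I would substitute (\ref{form:risksharing}) into each regime's payment rule and collect the $l_{a}^i$-dependent part.

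In the regimes settled by (\ref{eq:regularpayment}) --- $\Delta>0$ with $p_{r}\ge p_{d}$, and $\Delta<0$ with $p_{r}<p_{d}$ --- rewrite the payment as $P^i=l_{r}^i p_{r}+l_{e}^i(p_{d}-p_{r})$, so the only report-sensitive term is $l_{e}^i(p_{d}-p_{r})$; in the regimes settled by (\ref{form:choicepayment}) and (\ref{form:costsharing}) the report-sensitive term is the choice term $(l_{r}^i-l_{e}^i)(p_{d}-p_{r})$ (with the sign fixed by the case) when $i$ is a deviation reducer, and the cost-sharing adjustment $\Delta(p_{r}-p_{d})\,\delta^i/\sum_{j\in S}\delta^j$ when $i$ is a deviation contributor. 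I would then take the expectation over tomorrow's prices, the PCP forecast, and the real-time loads under the consumer's belief. The hypothesis that the aggregate load is unbiased is exactly what makes the direction of $\Delta$ symmetric, so that the expected ``mismatch premium'' charged on the part of consumer $i$'s bid that lands on the unfavorable side of the day-ahead/real-time spread is zero; summing the four regimes weighted by their probabilities, the coefficient of the report-sensitive quantity collapses to an expression proportional to $E[\Delta]$ (equivalently to the price-neutrality that unbiasedness induces), which vanishes. Hence $E[P^i]$ does not depend on $l_{a}^i$, every report --- in particular the truthful one --- is a best response, and truthful reporting by all consumers is a Nash equilibrium.

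The step I expect to be the main obstacle is the cost-sharing term in (\ref{form:costsharing}): it is nonlinear in $\delta^i$, it carries the random set $S$ of deviation contributors in its denominator, and --- unlike the direct effect on $l_{e}^i$ --- perturbing $l_{a}^i$ shifts $L_{a}$ and therefore every $l_{e}^j$ and even the sign of $\Delta$, so the clean ``expected premium is zero'' cancellation has to be re-examined. I would handle this by arguing that for a single member these induced perturbations are second order in her report, that to first order $\partial E[P^i]/\partial l_{a}^i$ is again a multiple of $E[\Delta]=0$, and that this stationary point is a weak minimum --- for instance because a strict deviation merely adds mean-zero exposure to the price spread and so raises the variance of $P^i$ without lowering its mean, which additionally singles out the truthful report among the best responses. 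Finally I would note that nothing in the argument uses the values of the other members' confidence factors $\rho_j$, which matches the ``regardless of the distribution of confidence'' clause of the statement.
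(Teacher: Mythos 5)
There is a genuine gap, and it sits at the center of your argument. Your main claim --- that after taking expectations the coefficient of the report-sensitive term ``collapses to an expression proportional to $E[\Delta]$'' and hence $E[P^i]$ is independent of $l_a^i$, so that \emph{every} report is a best response --- is false for this payment scheme. The regimes of Section~\ref{sec:paydisag} are defined jointly by the sign of $\Delta$ and the sign of $p_r-p_d$, so when you condition on a regime the price spread no longer has zero mean: e.g.\ in the regime $\Delta>0$, $p_r\ge p_d$ the term $(l_r^i-l_e^i)p_r$ carries $E[p_r-p_d\,|\,p_r>p_d]>0$. The paper's Appendix~\ref{Lemma01proof} carries out exactly the eight-case partition you sketch and finds that the cross terms do \emph{not} cancel: the expected unit price is $p_d$ \emph{plus} two nonnegative penalty terms that are active precisely when the consumer is a deviation contributor, so $E[P_i/l_r]\ge p_d$ with equality only when the effective bid matches the real-time consumption. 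In other words the expected payment genuinely depends on the report, and it is this dependence (a strict penalty for inaccurate effective bids, arising from the asymmetry between the price choice granted to deviation reducers and the cost-sharing charged to deviation contributors) that makes truthfulness incentive-compatible rather than vacuously so. Your fallback --- that the first-order condition gives a stationary point which is a weak minimum because deviating ``adds mean-zero exposure'' --- is the right intuition but is exactly the step that requires the case-by-case computation; asserting it does not discharge it.

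You are also missing the second half of the paper's argument: even granting that the accurate effective bid $l_e^i=l_r^i$ (in expectation) minimizes the expected price, one must still show that the \emph{truthful announcement} $l_a^i=E[l_r^i]$ is the report that produces it. Because $l_e^i$ in (\ref{form:risksharing}) mixes $l_a^i$ with the PCP forecast through a denominator $\sum_j \rho_j l_a^j$ that itself contains $l_a^i$, this is not immediate; the paper handles it by bounding the denominator between $B_1$ and $B_2$, using the hypothesis that the other consumers and the PCP are unbiased to get $E[L_f-\sum_{j\neq i}l_a^j]=E[l_r]$, and showing $E[l_e^i-l_r^i]\to 0$ as $l_a^i\to E[l_r]$ from either side. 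This limit argument is what makes the conclusion hold ``regardless of the distribution of confidence for other consumers''; your remark that the $\rho_j$ never enter is not justified without it.
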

\begin{proof}
We first prove a lemma that states that the effective bid, equal to expected real-time demand, minimizes the expected price; then show that truthful load provided by each consumer ensures the optimal expected bid (equal to expected demand).
\begin{lemma*}
Assuming normal distribution for real-time prices, with expected value equal to the day-ahead price ($E[p_r]=p_d$)\footnote{\ This assumption is in accordance with PJM market price distribution.}, if the total day-ahead load estimation is unbiased ($E[\Delta]=0$) accurate effective bid in the day-ahead market ($l_e^i=l_r^i$) results in a minimum expected electricity price for the consumer (see Appendix~\ref{Lemma01proof}).
\label{lemma01}
\end{lemma*}
From the above lemma, the accurate effective bid (equal to the expected real-time load) yields the least expected price. Accordingly, we set the day-ahead load in (\ref{form:risksharing}) equal to $l_r$: 
\setlength{\arraycolsep}{0.0em}
\begin{eqnarray}
l_e^i-l_r^i=l_a^i-l_r^i+\frac{\rho_i^2l_a^i(L_f-\sum_{j\in N}{l_a^j})}{\sum_{j\in N}{\rho_jl_a^j}}
\end{eqnarray}
\setlength{\arraycolsep}{5pt}

and for the expected value we have: 
\setlength{\arraycolsep}{0.0em}
\begin{eqnarray}
E&\left[l_a^i-l_r^i+\frac{\rho_i^2l_a^i(L_f-\sum_{j\in N}{l_a^j})}{\sum_{j\in N}{\rho_jl_a^j}} \right]\nonumber \\ 
&=l_a^i-E[l_r]+E\left[\frac{\rho_i^2l_a^i(L_f-\sum_{j\in N}{l_a^j})}{\sum_{j\in N}{\rho_jl_a^j}} \right]\nonumber
\end{eqnarray}
\setlength{\arraycolsep}{5pt}

In the last term of the above equation, lets assume that we can define positive upper and lower boundary for the denominator corresponding to $l_a^i$: $B_1<\sum_{j\in N}{\rho_jl_a^j}<B_2$. For the lower boundary, if $l_a^i \to E[l_r]^-$,  we have: 
\setlength{\arraycolsep}{0.0em}
\begin{eqnarray}
E[l_e^i-l_r^i]&>f_{L}(l_a^i)=l_a^i-E[l_r]+\frac{E[\rho_i^2 l_a^i (L_f-\sum_{j\in N}{l_a^i})]}{B_2}\nonumber\\
&=l_a^i-E[l_r]+\frac{\rho_i^2 l_a^i (E[L_f-\sum_{j\in N \setminus i}{l_a^j}]-l_a^i)}{B_2}\nonumber
\end{eqnarray}
\setlength{\arraycolsep}{5pt}
because other consumers and the PCP are assumed to be truthful and unbiased, we have:
$$
E[L_f-\sum_{j\in N\setminus i}{l_a^j}]=E[l_r]
$$
if we set $l_a^i=E[l_r]$, the RHS of the inequality approaches zero: $\lim_{l_a^i \to E[l_r]^-} f_{L}(l_a^i)=0$.
 In addition, $B_1$ results in an upper boundary:
\begin{equation}
f_{U}(l_a^i)=l_a^i-E[l_r]+\frac{\rho_i^2 l_a^i (E[l_r]-l_a^i)}{B_1}\nonumber
\end{equation} 
where $\lim_{l_a^i \to E[l_r]^-} f_{U}(l_a^i)=0$. Since both upper and lower boundary approach zero for $l_a^i \to E[l_r]^-$, using a similar logic these boundary approach zero for $l_a^i \to E[l_r]^+$. We can then conclude: 
$$
\lim_{l_a^i \to E[l_r]} E[l_e^i-l_r^i]=0
$$
which indicates that condition to other consumers being truthful, no single consumer has incentive to change her truthful strategy, thus being truthful is a Nash equilibrium. 
\end{proof}

\begin{proposition}
\label{Proposition04}
For a given consumer, if PCP is unbiased with respect to aggregate load estimation, relying on PCP's effective bid is a (weakly) dominant strategy.
(see Appendix~\ref{Proposition02proof}) 
\end{proposition}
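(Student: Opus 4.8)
The plan is to regard consumer $i$'s strategy as her reported confidence $\rho_i\in[0,1]$ (holding the other consumers' reports and the PCP forecasting rule fixed, with $E[L_f]=E[L_r]$), and to show that her expected electricity price is non-increasing in $\rho_i$, so that $\rho_i=1$ — full reliance on the PCP's effective bid — weakly dominates every other choice. The lever is the lemma already invoked for Proposition~\ref{Proposition01}: with $p_r$ normal and $E[p_r]=p_d$, the expected payment of a consumer is minimized when her effective bid equals her real-time load, and, more precisely, the structure of (\ref{eq:regularpayment})--(\ref{form:costsharing}) makes the expected price (after integrating out $p_r$) an increasing, convex function of the expected effective deviation $E[\delta^i]=E[l_e^i-l_r^i]$. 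Hence it suffices to prove that $\bigl|E[l_e^i-l_r^i]\bigr|$ is minimized, and ideally driven to $0$, at $\rho_i=1$.

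The second step is to substitute the effective-bid formula (\ref{form:risksharing}), giving
\[
E[l_e^i-l_r^i]=\bigl(l_a^i-E[l_r^i]\bigr)+E\!\left[\frac{\rho_i^2\,l_a^i\,(L_f-L_a)}{\sum_j\rho_j l_a^j}\right],
\]
and to analyze the two terms separately. The first is the consumer's own reporting bias $b_i:=l_a^i-E[l_r^i]$. For the second, unbiasedness of the PCP gives $E[L_f-L_a]=E[L_r]-L_a=-\sum_j b_j$, so the correction is, in expectation, a pull of the effective deviation toward the value it would take if consumer $i$ simply inherited the PCP's (unbiased) aggregate estimate. Bounding the denominator between positive constants $B_1<\sum_j\rho_j l_a^j<B_2$ — exactly as in the proof of Proposition~\ref{Proposition01} — and squeezing between the resulting envelope functions $f_U,f_L$, one shows that the magnitude of this pull is increasing in $\rho_i$; hence $|E[l_e^i-l_r^i]|$ is non-increasing in $\rho_i$, strictly decreasing when $b_i\neq 0$ and constant (at $0$) when the consumer is already truthful.

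Combining the two steps yields the claim: the expected price is a monotone function of $|E[l_e^i-l_r^i]|$, which is non-increasing in $\rho_i$, so $\rho_i=1$ gives an expected price no larger than that of any alternative, regardless of the other consumers' confidence profile — weak dominance, and strict whenever the consumer's own estimate is biased (which also recovers the ``join a PCP beats going it alone'' reading, $\rho_i=1$ versus $\rho_i=0$). I expect the main obstacle to be the one that already makes Proposition~\ref{Proposition01}'s argument delicate: $\rho_i$ appears both in the numerator and inside the denominator $\sum_j\rho_j l_a^j$ of (\ref{form:risksharing}), so a clean derivative in $\rho_i$ is unavailable and one must instead work with the upper/lower envelopes and verify both are monotone in $\rho_i$ and pinch the deviation. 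A secondary technical point is making the reduction ``expected price is monotone in the expected deviation'' rigorous, which requires writing out $E[P_{(t)}^i]$ over the normal $p_r$ for each of the case families in (\ref{eq:regularpayment})--(\ref{form:costsharing}) and checking that the residual variance terms do not upset the monotonicity.
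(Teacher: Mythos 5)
There is a genuine gap, and it sits exactly where you flagged a ``secondary technical point'': the reduction of the expected price to a monotone function of $\bigl|E[l_e^i-l_r^i]\bigr|$ is false in the regime Proposition~\ref{Proposition04} is actually about. The Lemma (and hence the ``price is minimized at $\delta^i=0$'' picture) is proved under the assumption that the \emph{aggregate} estimate is unbiased, $E[\Delta]=0$, so that $\omega=1/2$ in (\ref{eq:expected04}). Proposition~\ref{Proposition04} drops that assumption: other consumers may bias $L_a$, so $\omega\neq 1/2$, and then the expected price depends not on $|E[\delta^i]|$ but on the \emph{sign alignment} of $\delta^i$ with $\Delta$ through the terms $\omega\theta\,E[\delta^i\Delta/(\Delta^{S^+}l_r)\mid\cdot]$ and $(1-\omega)(1-\theta)E[\delta^i\Delta/(\Delta^{S^-}l_r)\mid\cdot]$. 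A consumer whose deviation opposes the aggregate deviation is a \emph{deviation reducer}, gets the price choice of (\ref{form:choicepayment}), and can end up with expected price strictly below $p_d$ --- i.e., a \emph{larger} $|E[\delta^i]|$ of the right sign beats $E[\delta^i]=0$ (this is precisely Fig.~\ref{fig:AggDev}). The paper's argument in Appendix~\ref{Proposition02proof} is built on this: the correction term in (\ref{form:risksharing}) is proportional to $L_f-L_a$, so when others over-report, an unbiased PCP automatically pushes consumer $i$'s effective bid down, making her a deviation reducer and shrinking the dominant $\omega\theta$ coefficient in (\ref{eq:expected04}). That mechanism --- not minimization of $|E[\delta^i]|$ --- is what makes reliance on the PCP weakly dominant.

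Your second step also contradicts your first. When consumer $i$ is truthful ($b_i=0$) but $\sum_{j\neq i}b_j\neq 0$, the correction term $\rho_i^2 l_a^i(L_f-L_a)/\sum_j\rho_j l_a^j$ has nonzero expectation and its magnitude \emph{grows} with $\rho_i$ (as your own envelope computation would show), so $|E[l_e^i-l_r^i]|$ is \emph{increasing}, not non-increasing, in $\rho_i$; under your premise that price is increasing in $|E[\delta^i]|$, you would then conclude $\rho_i=0$ is optimal --- the opposite of the claim. The claim that $|E[\delta^i]|$ is ``constant at $0$ when the consumer is already truthful'' only holds if everyone else is also unbiased, in which case the proposition is vacuous (all strategies tie). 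To repair the argument you must keep the case decomposition of (\ref{eq:expected04}) and show that the $\rho_i$-induced shift in the distribution of $\delta^i$ lowers the positive penalty terms whenever $\omega\neq 1/2$, which is the route the paper takes.
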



%



\section{Short Term Load Forecasting}
\label{sec:forecasting}

The PCP uses a variation of double seasonal autoregressive exponential smoothing (DSAES) for forecasting aggregated load~\cite{taylor2003short,taylor2010exponentially}. 
The original formulation is taken from~\cite{taylor2007short} as: 
 \begin{equation}
\hat{L}_f(t+k)=b_t+d_{t-s_1-k_1}+w_{t-s_2+k_2}+\phi^ke_t
\label{form:forecast}
 \end{equation}

 where $\hat{L}_f(t+k)$ is the load forecast for the next $k$ steps of current time interval \(t\), \(s_1\) and \(s_2\) are the number of time intervals in a day and week (24 and 168 respectively), \(b_t\) is the smoothed level, \(d_t\) and \(w_t\) are the seasonal index for daily and weekly cycles, \(k_1=[(k-1)\ mod\ s_1]+1\) and \(k_2=[(k-1)\ mod\ s_2]+1\), and \(k\) is forecasting lead time. The term involving \(\phi\) is the autoregressive adjustment for first-order residual autocorrelation. 

 \begin{wrapfigure}[18]{l}{0.52\columnwidth}
 \vspace{-5pt}
 \centering
    \includegraphics[width=0.55\columnwidth]{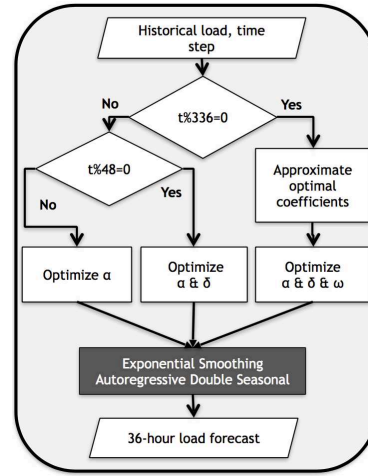}
   \vspace{-20pt}
  \caption{Dynamic load forecasting model (opt: optimize)}
 \label{fig:flowchart}
\end{wrapfigure}

A critical step in this model is to initialize the three main model parameters for daily, weekly and autoregressive adjustments. The initialization method introduced in~\cite{taylor2010exponentially} uses \(10^5\) random initial points for each model parameter as well as the Newton method for root estimation. 

This model however has two practical problems for autonomous systems: First, the level of sensitivity of the prediction accuracy to these parameters makes static parameter estimation less appealing. Moreover, the initialization algorithm is rather computationally intensive and demands disproportionate computational resources. Instead, we modified the model using a dynamic algorithm that calculates forecasting model parameters once a day to predict the 36 hours of load time series. Note that the day-ahead bid includes all the expected values of 24 consecutive hours of next day load, which is in accordance with the day-ahead market functionality. Then, using 12- to 36-hour load forecasting, PCP can submit the day-ahead bid at 6PM of the day before the transaction\footnote{The complexity of the second approach is reduced from $O(T*10^{15}*F(n))$ (for each data point) to $O(T*10^{10}*F(n)/7)$ when T is the number of historical time steps used (672 for 4 weeks) and F(n) is the complexity of Newton method for n refers to digit precision. The computational intensiveness is however not a major problem in the proposed PCP as it is calculated only once a day.}(see Fig.~\ref{fig:flowchart}).

This model was implemented in Python on a sample consumers load data~\cite{laodDataSet}. The results show 9\%-15\% reduction in forecasting error, an improvement that is the result of using dynamic parameter updating, in comparison with the more conventional static parameter methods (Table~\ref{tab:MAPE}).

\begin{table}[!t]
 \renewcommand{\arraystretch}{1.3}
 \caption{MAPE Measure of DSAES Method with Fixed and Dynamic Parameters }
  \label{tab:MAPE}
\centering
\begin{tabular}{c| | l | l c}
\hline
 \bfseries Lead Hours    &  \bfseries Dynamic Parameters & \bfseries Fixed Parameters    \\
\hline
12   & 0.0474 & 0.0552      \\
\hline
24           &0.0424 & 0.0468   \\
\hline
36             & 0.0497 & 0.0541    \\
\hline

\end{tabular}

\end{table}

\section{Simulation}
In this section, we first provide some illustrative computational results for the axiomatic models developed in Section~\ref{sec:ax_model}. We the provide a simulated model of the proposed scheme, using real-world market data.

\subsection{Numerical Illustration}
Fig.~\ref{fig:paymentdisaggregation} shows the results for individual unit price as a function of individual load deviation as presented in Section~\ref{sec:paydisag}. Each figure illustrates the price sensitivity to load deviation for four scenarios of real-time price deviation (RPD) where ($\Delta$) represents the aggregate load deviation.  The relative price is decreasing as a function of real-time load when the aggregate load is negative, although the marginal price is increasing (see Fig.~\ref{fig:paymentdisaggregation}a). The opposite case holds for the positive aggregate load deviation (see Fig.~\ref{fig:paymentdisaggregation}d). In the case with zero aggregate deviation, we can see that the optimum price pertains to zero individual deviation (see Fig.~\ref{fig:paymentdisaggregation}b). Finally, if the individual deviation can offset the aggregate deviation, one can observe a discontinuous price because the consumer switches from being a \textit{deviation reducer} to a \textit{deviation contributor} (see Fig.~\ref{fig:paymentdisaggregation}c). 

In Fig.~\ref{fig:IndDev}, we show the results for the electricity unit price as a function of individual load deviation (day-ahead bid minus the expected real-time load). In this figure, the optimum level of deviation is very close to zero for different levels of MAPE. However, we can see a shift toward right for higher levels of estimation error. The reason is a higher price sensitivity to lower values in denominator, i.e., the real-time load. Fig.~\ref{fig:AggDev} shows the expected price versus the individual deviation when the expected total deviation ($E[\Delta]$) varies. The expected price is higher for higher deviation when there is negligible aggregate deviation. However, the expected price decreases if a consumer's load deviation is against the aggregate load deviation. For instance, a consumer benefits if its load deviation is positive while the aggregate deviation is negative and vice versa. This results are in agreement with the results shown in Fig.~\ref{fig:paymentdisaggregation}. This opportunity ensures that any consumer with knowledge of aggregate bias may modify her expected load to the PCP, which in turn reduces the aggregate deviation. This phenomena can be considered as a negative feedback loop between the expected aggregate bias and the individual bias. 

\begin{figure}[t]
\centering
 \includegraphics[width=0.48\textwidth]{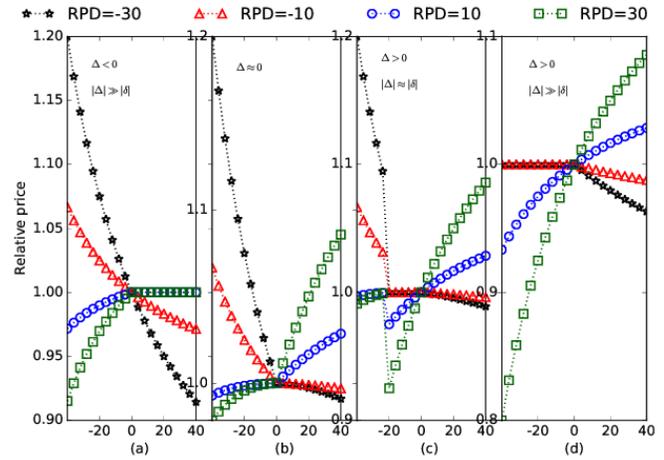}
\caption{Relative price deviation vs load deviation and \textit{real-time price deviation}(RPD): (a) aggregate balance is negative and $|\Delta|$ is significantly larger than $|\delta|$, (b) aggregate balance is close to zero, (c) aggregate balance is positive and $|\Delta|$ is comparable to $|\delta|$, (d) aggregate balance is positive and $|\Delta|$ is significantly larger than $|\delta|$, ($\Delta$:aggregate deviation, $\delta$: individual load deviation).}
\label{fig:paymentdisaggregation}
\end{figure}

\begin{figure}[!t]
    \centering
    \mbox{\subfloat[]{\label{fig:IndDev}\includegraphics[clip,width=0.88\columnwidth]{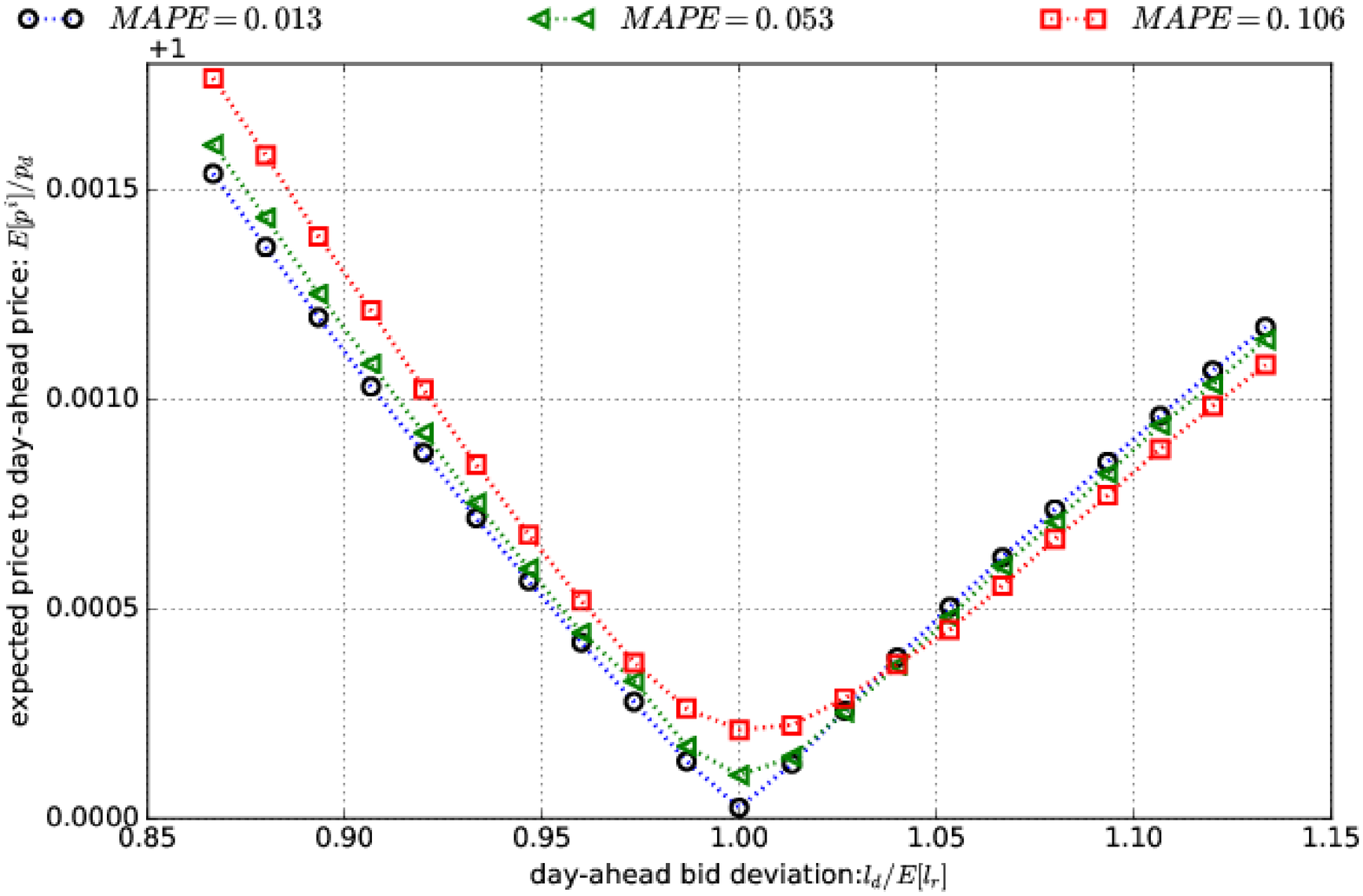}}}
      \mbox{\subfloat[]{\label{fig:AggDev} \includegraphics[clip,width=0.88\columnwidth]{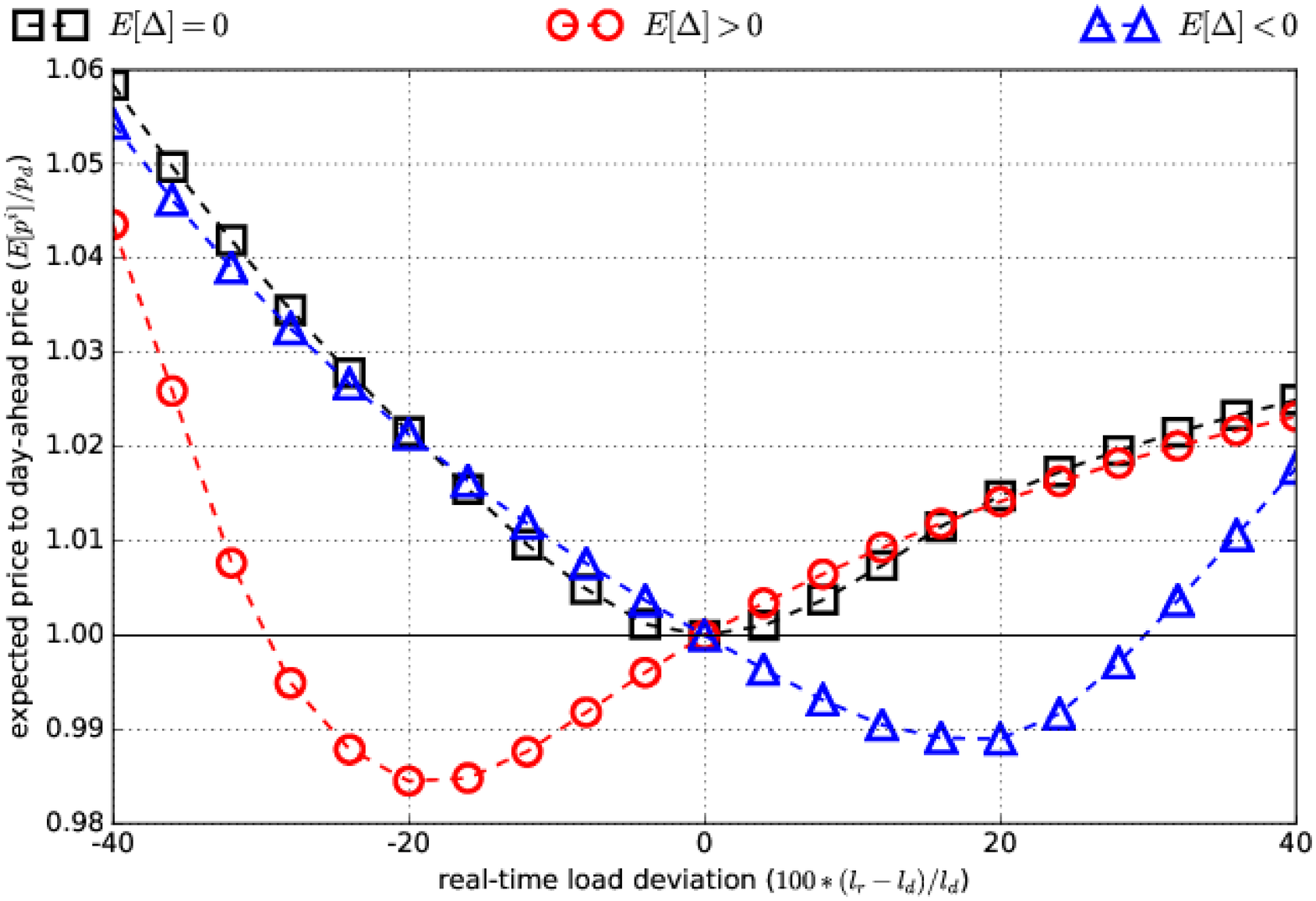}}}
    \caption{Electricity unit price vs the effective bid deviation and real-time load deviation: a) the expected relative price versus the bid deviation with unbiased \textit{cooperative}, b) the expected relative price vs real-time deviation with biased \textit{cooperative}.}
    \label{fig:expectedPrice}
\end{figure}

\subsection{Agent-Based Simulation}
\label{sec:agentbasedsim}
We study the performance of the PCP using real-world electricity day-ahead and real-time prices and the realistic load profiles in an agent-based model. We collect a sample repository of simulated  load profiles according to the PJM geographic area~\cite{laodDataSet}. To benchmark the proposed schemes, we compare the simulation results with \textit{real-time pricing} (RTP) scheme. In RTP, each consumer places a bid in the day-ahead market and the payment is calculated using her effective bid, real-time consumption, and market prices similar to (\ref{TotalPayment}).  Simulations, including all the models for the PCP and consumers are developed  using Object-Oriented Python.

The model inputs include the trading time frame ($T$), the number of consumers ($N$), hourly load profile for each consumer ($L_i$), market day-ahead prices ($P_{d}$), market real-time prices ($P_{r}$), and the number of simulation rounds ($M$). The estimated load and the confidence factors are calculated and announced by each consumer agent. The latter is assigned using a uniform distribution between 0 and 1 and is updated by each consumer based on the \textit{cooperative}'s historical advantage to her. The model output include the list of relative prices ($p_{pcp}/p_d$ or $p_{rtp}/p_d$) for each set of consumers with similar level of MAPE. 

The trading time frame is set to 3600 hourly time intervals for Feb-July 2015 and we consider the same set of consumers (one hundred) for PCP and RTP schemes. To represent a consumer's load estimation error, we use MAPE measure that evenly varies between 0.02 and 0.20. The MAPE of our aggregate forecasting model, presented in Section~\ref{sec:forecasting}, is less than 0.05 in this model. To diminish the effect of a single load profile (rather than MAPE) on the outcome, we consider 50-round simulations with random MAPE assignments to consumers. For each consumer, we assume that the confidence factor ($\rho_i$) is a random value between 0 and 1 in the first iteration, drawn from a uniform distribution. For the purpose of this simulation, we further assume that each confidence factor is updated to the average of its previous value and 1(0) in case the PCP average price is lower(higher) than RTP average price during the last 24 hours. Fig.~\ref{fig:avgconfidence} shows the evolution of confidence factor (average $\bar{\rho}$ for all consumers and $\rho_i$ for consumers with different MAPE levels). The average confidence factor converges to lower values for consumers with higher MAPE, mainly because they are more likely to be a \textit{deviation contributor}. The statistical measures for each data point are generated using a set of $1.8e^6$ calculated relative prices\footnote{Includes the 50 rounds of simulations on 3600 time steps.}. The statistical measures include relative price median, standard deviation and various percentiles of each set of sample points. To better represent the overall trend, in Fig.~\ref{fig:pricedistribution}, we don't show all of the data points. 

 The first simulation is aimed at determining the effect of load forecasting error on consumer's \textit{relative price} and \textit{relative price standard deviation}. Fig.~\ref{fig:pricedistribution} shows the results for unit price distribution. In all figures, the x-axis shows the MAPE and the price variation is expectedly increasing corresponding to the forecasting error. Fig.~\ref{fig:pricedist} compares the price percentiles between PCP and RTP. The median relative price is equal to one for all consumers because expected values of real-time and day-ahead prices are very close to each other in the PJM market. In PCP, the price distribution is skewed toward lower prices because, assuming an unbiased load forecasting and price distribution and positive deviation, the lower values of $p_r$ and $p_d$ statistically apply to 25\% of the cases.  Also, $p_r$ applies to other 50\% of cases and a finally quarter of the prices remain between the two values (50\% higher and 50\% lower than $p_d$). So, less than half of the relative prices can stay higher than the median price and this is why the distributions are skewed toward the lower prices. Note that same logic can be applied to negative deviation, however, this logic only applies to consumers whose deviation cannot offset the aggregate deviation. For higher levels of MAPE, the distribution approaches a normal distribution with mean value equal to one.

\begin{figure*}[!t]
    \centering
    \subfloat[]{\label{fig:pricedist} \includegraphics[width=2.2in]{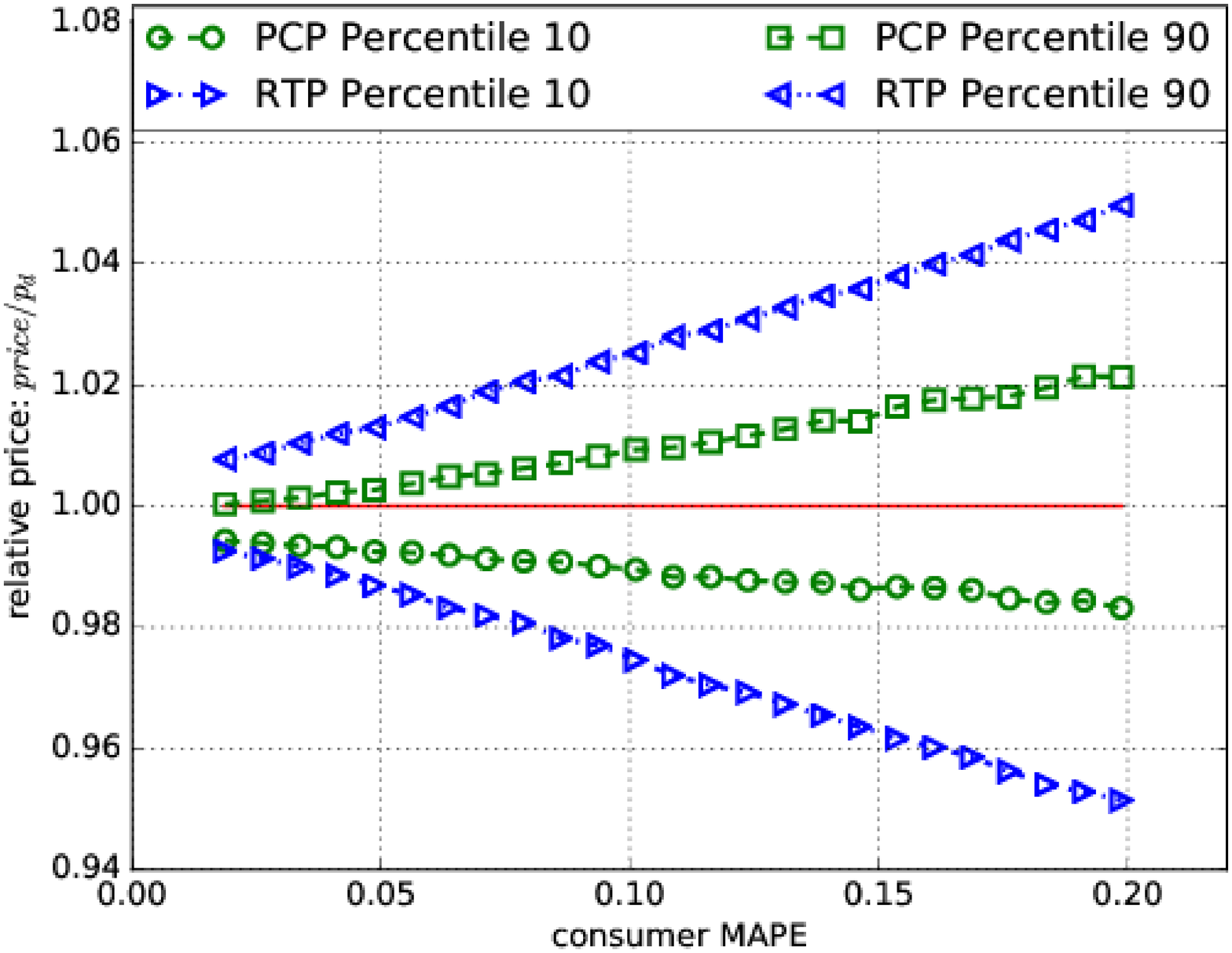}}
        \hfil
        \subfloat[]{\label{fig:std} \includegraphics[width=2.2in]{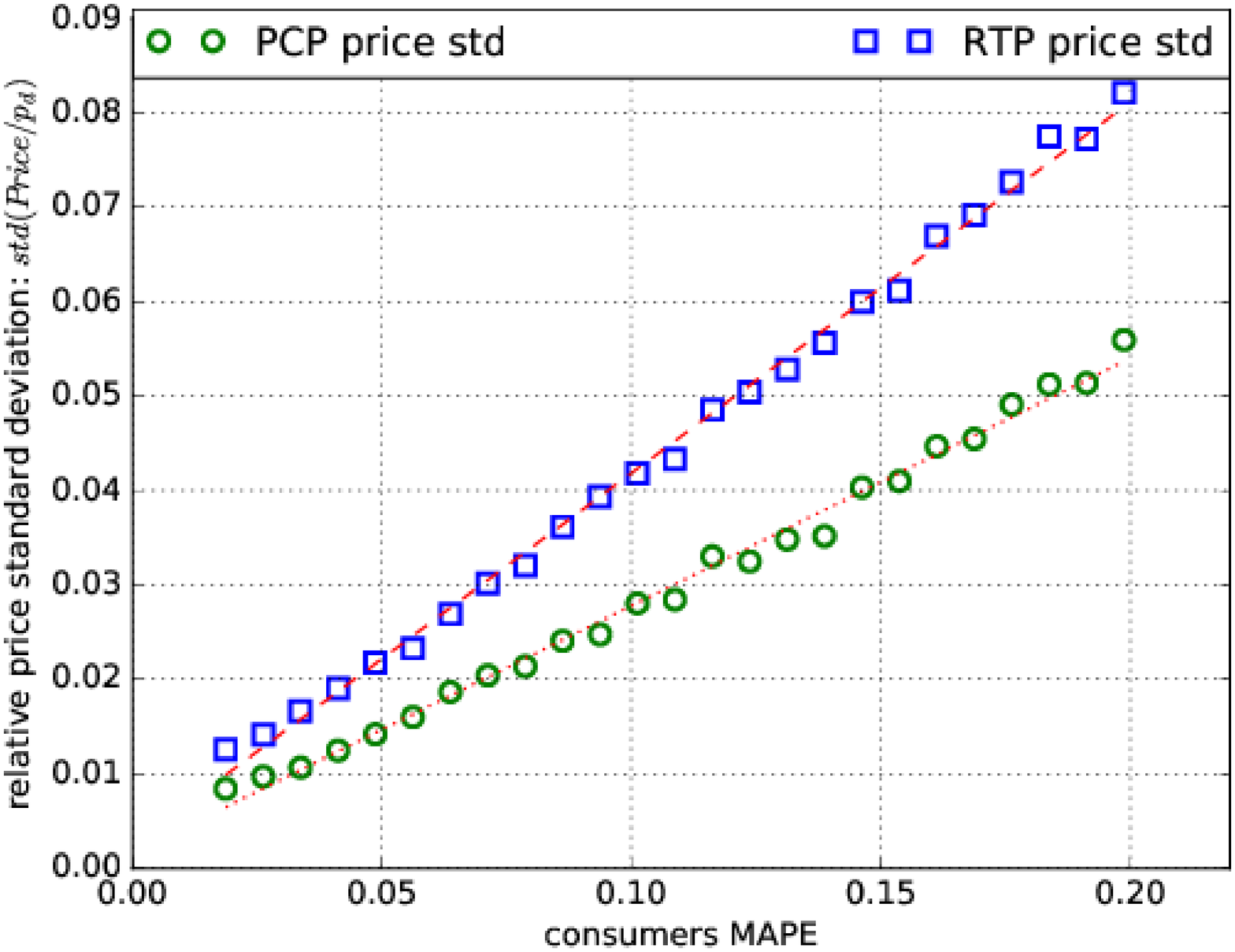}}
        \subfloat[]{\label{fig:avgconfidence} \includegraphics[width=2.2in]{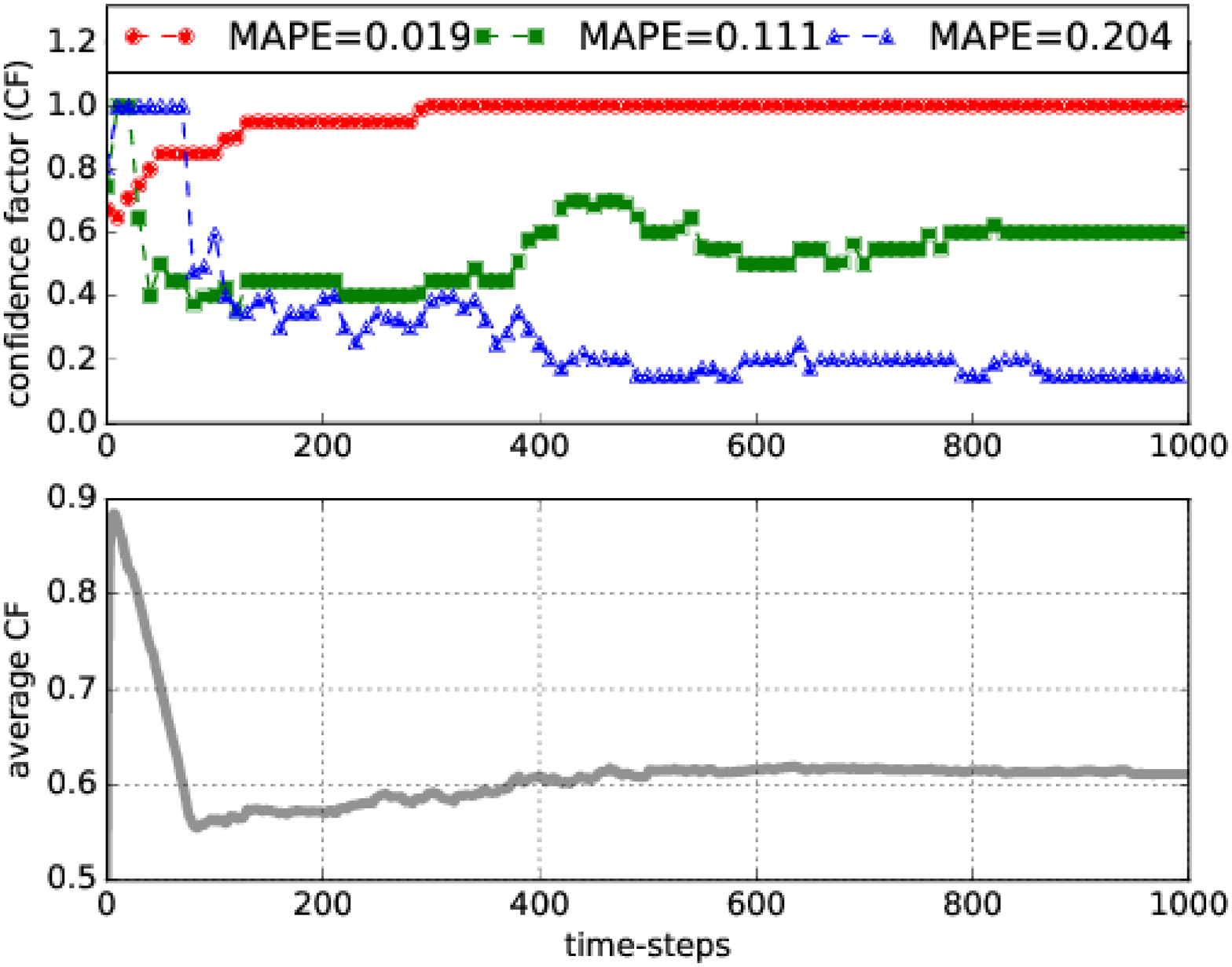}}

        \label{fig:standarddeviation}
    \caption{Simulation results for the electricity price as a function of the load estimation error (MAPE): (a) Relative unit price (the final price divided by the day-ahead price) percentiles  and (b) Relative price standard deviation (normalized standard deviation with the day-ahead price) for RTP and PCP. The agent-based simulation includes 100 consumers, (c) confidence factor evolution in multi-agent simulation.}
    \label{fig:pricedistribution}
\end{figure*}

\section{Conclusion and Discussion}
In electricity wholesale markets, day-ahead prices offer clear incentives to consumers for flattening their load-profiles. However, risks associated with price volatilities in real-time market, among other things, discourage consumers from active participation in the electricity market. Any autonomous scheme that offers consumers lower risk associated with real-time price volatility can incentivize active demand-side management based on market prices. In this paper, we introduced an autonomous \textit{cooperative} agent that applies load forecasting along with axiomatic day-ahead bid and payment sharing model for the electricity wholesale markets. We contributed to (i) axiomatic models for day-ahead bid calculation and cost sharing and (ii) a dynamic coefficient updating in a competitive forecasting technique. 

The paper offers a novel incentive-compatible scheme for electricity \textit{cooperative}s to achieve a collective goal of reducing price risk for the participants while enabling the prospect of reduced average price for individual consumers, using a bottom-up, axiomatic approach. It is mathematically proven that the provided scheme is not only beneficial for individual consumers to participate in, but also provides incentives to participants to adopt a truthful behavior when interacting with the scheme. Using agent-based simulation, we tested the proposed scheme with real-world data of PJM markets and realistic consumer load profiles. The results demonstrate that the PCP scheme, compared to a hypothetical RTP scheme -in which individual consumers interact directly with the market - reduces electricity price variations for all consumers. In addition to being truthful to PCP about their estimated load, consumers are incentivized to decrease their estimation variance since higher estimation accuracy leads to higher confidence factor and lower prices.

\appendices
\section{Proof of the Lemma~\ref{lemma01}}
\label{Lemma01proof}
Let's assume the following probabilities for a specific time interval $t$: $\rho=P[p_r>p_d]$, $\omega=P[L_r>L_e]$, and $\theta=P[l_r^i>l_e^i]$ for the consumer $i$. According to the \textit{partition theory}, the expected payment by the consumer is (for simplicity, we avoid repetitive $t$ and show \textit{case} with letter \textit{c}): 

\setlength{\arraycolsep}{0.0em}
\begin{eqnarray}
&E[\frac{P_i}{l_r}]=\rho \omega \theta E[\frac{l_ep_d+\delta^i p_r}{l_r}|c_{1a}]\nonumber\\
&{+}\:\rho \omega (1-\theta) E[\frac{l_ep_d+\delta^i p_r}{l_r}|c_{2a}]+\rho(1-\omega) \theta E[p_d|c_{3a}]\nonumber\\
&{+}\:\rho(1-\omega)(1-\theta)E[p_d+\frac{\delta^i \Delta (p_r-p_d)}{\Delta^{S^-} l_r}|c_{4a}]\nonumber\\
&{+}\:(1-\rho) \omega \theta E[p_d+\frac{\delta^i \Delta (p_r-p_d)}{\Delta^{S^+} l_r}|c_{1b}]\nonumber\\
&{+}\:(1-\rho) \omega (1-\theta) E[p_d|c_{2b}]\nonumber\\
&{+}\:(1-\rho) (1-\omega) \theta E[\frac{l_ep_d+\delta^i p_r}{l_r}|c_{3b}]\nonumber\\
&{+}\:(1-\rho) (1-\omega) (1-\theta) E[\frac{l_ep_d+\delta^i p_r}{l_r}|c_{4b}]
\label{eq:expectedprice01}
\end{eqnarray}
\setlength{\arraycolsep}{5pt}

This can be combined to much simpler expression. As the result, the equation equals to:
\setlength{\arraycolsep}{0.0em}
\begin{eqnarray}
&E[\frac{P_i}{l_r}]=Kp_d+\rho \omega E[\frac{l_e}{l_r}]E[p_d-p_r|p_r>p_d]\nonumber\\
&\phantom{=}+(1-\rho) (1-\omega) E[\frac{l_e}{l_r}]E[p_d-p_r|p_r<p_d]\nonumber\\
&\phantom{=}+\rho(1-\omega)(1-\theta)E[\frac{\delta^i \Delta}{\Delta^{S^-} l_r}|case_1]E[p_r-p_d|p_r<p_d]\nonumber\\
&\phantom{=}+(1-\rho) \omega \theta E[\frac{\delta^i \Delta}{\Delta^{S^+} l_r}|case_4]E[p_r-p_d|p_r>p_d]\nonumber
\label{eq:expectedprice02}
\end{eqnarray}
\setlength{\arraycolsep}{5pt}

In the last expression, $K$ is a constant, the $p_r$ has normal distribution with the mean $p_d$, then $\rho=0.5$. Also, the expected value of $\Delta$ is assumed to be zero by consumers, so $\omega=0.5$. Also we have: 
\begin{equation}
E[p_d-p_r|p_r>p_d]=-E[p_d-p_r|p_r<p_d]\nonumber
\end{equation}
We simplify the final formulation accordingly as: 

\setlength{\arraycolsep}{0.0em}
\begin{eqnarray}
E&[\frac{P_i}{l_r}]=p_d+0.25 \alpha \theta E[\frac{\delta^i \Delta}{\Delta^{S^+} l_r}|l_r>l_e,\Delta>0] \nonumber\\  
&\phantom{=p_d+}-0.25 \alpha (1-\theta) E[\frac{\delta_i \Delta}{\Delta^{S^-} l_r}|l_r<l_e,\Delta<0]
\label{eq:expected04}
\end{eqnarray}
\setlength{\arraycolsep}{5pt}

when $\alpha=E[p_r-p_d|p_r>p_d]>0$. Also by definition $E[\frac{\delta^i \Delta}{\Delta^{S^+} l_r}|l_r>l_e,\Delta>0]$ is a positive number ($\Delta^{S^+}>0$) and $E[\frac{\delta^i}{l_r}|l_r<l_e]$ is negative ($\Delta^{S^-}<0$); this ensures that $
E[\frac{P_i}{l_r}]\geq p_d$ and day-ahead bid equal to real-time consumption ensures the lowest expected price equal to $p_d$.

\section{Proof of the Proposition~\ref{Proposition04}}
\label{Proposition02proof}
In the cases when other consumers are biased in their estimation while the PCP is not biased, according to (\ref{form:risksharing}) the positive estimation bias by other consumers will reduce the expected value of day-ahead load by consumer $i$, also a negative estimation bias will increase the expected value of day-ahead bid. In other words, $E[\Delta]>0$ leads to $E[\delta]<0$ and vice versa if the consumer has been truthful with its estimation. 

Also, based on (\ref{eq:expected04}) the final expected price is combination of first part, with expected value of $p_d$ and the second part, including positive values. Lets assume that the $\omega=P[\Delta>0]>0.5$, then the most influential term in the equation will be: 
$$
\omega \theta E[\frac{\delta^i \Delta}{\Delta^{S^+} l_r}|l_r>l_e,\Delta>0]
$$
and since $\omega$ is already increased, we can reduce the coefficient $\omega*\theta$ by reducing the probability of positive load deviation $P[\delta^i>0]=\theta$. Since this effect holds for any $E[\Delta]>0$ with a symmetric distribution, any individual deviation that balances the aggregate deviation can reduce the expected price. The same logic holds for cases with $E[\Delta]<0$ or $\omega=P[\Delta>0]<0.5$ when reducing the aggregate deviation benefits the individual consumer. 
In simulations, this effect can further reduce the final price to $p<p_d$, as it is illustrated in Fig.~\ref{fig:AggDev}. 

\bibliographystyle{IEEEtran}
\bibliography{references}

\begin{IEEEbiography}[{\includegraphics[width=1in,height=1.25in,clip,keepaspectratio]{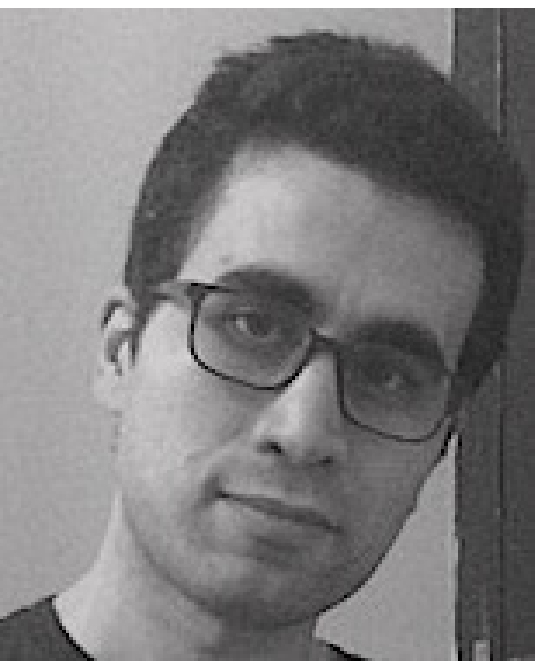}}]{Abbas Ehsanfar}
 is a Ph.D. student at the school of Systems and Enterprises Engineering at Stevens Institute of Technology. He completed a Bachelors degree in Electrical Engineering at Sharif University of Technology in Tehran, and his Master's degree at the same university. Abbas's research interests involve developing intelligent autonomous algorithms, \textit{cooperative} schemes, and learning frameworks applicable to adaptive systems in general and electricity smart grids in particular.
\end{IEEEbiography}

\begin{IEEEbiography}[{\includegraphics[width=1in,height=1.25in,clip,keepaspectratio]{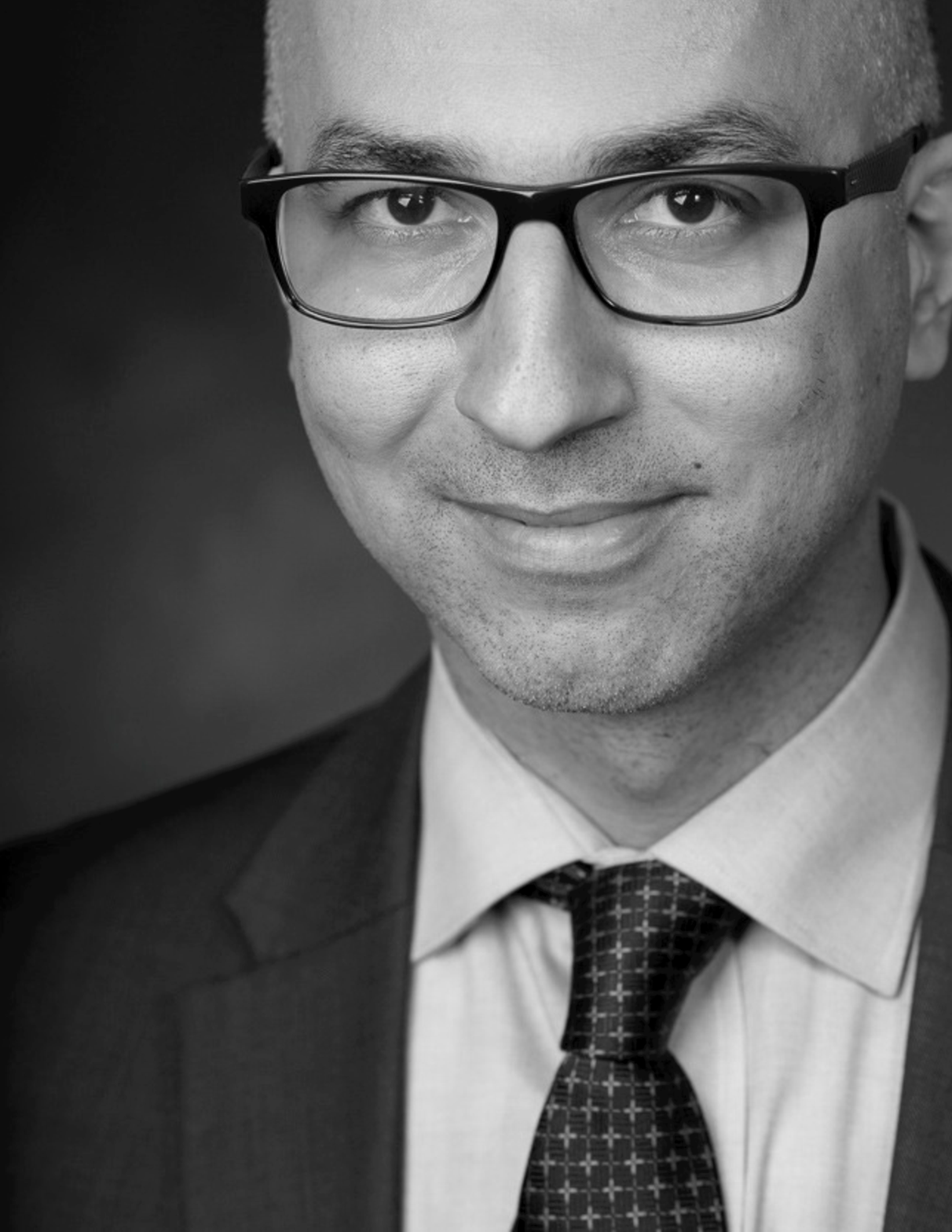}}]{Babak Heydari}
received the B.Sc. degree from Sharif University of Technology in Tehran, Iran, in 2002 and the Master's and Ph.D. degrees in electrical engineering from University of California at Berkeley in 2005 and 2008 respectively. He also received a graduate certificate in management of technology and economics at UC Berkeley in 2007. 

Following his graduation, he worked as entrepreneur in Silicon Valley start-ups until he joined Stevens Institute of Technology in 2011 as assistant professor where he is currently a faculty at the School of Systems and Enterprises and the director of Complex Evolving Networked Systems Lab. Prof. Heydari has a diverse set of research interests and does interdisciplinary research at the intersection of engineering, economics and systems sciences. His research topics include emergence and evolution of collective behavior in social and socio-technical networks, architecture and resource-sharing in socio-technical systems and data-driven policy analysis. His research has been funded by NSF, DARPA, INCOSE, SERC, and a number of private corporations. He was the technical chair of the 4th International Engineering Systems Symposium, CESUN2014 and is an associate editor of the Wiley journal of systems engineering. Prof. Heydari is the recipient of National Science Foundation CAREER Award in 2016.
\end{IEEEbiography}

\end{document}